\documentclass[%
 pra, twocolumn,
 amsmath,amssymb,
]{revtex4-2}

\UseRawInputEncoding

\usepackage{optidef}
\usepackage{braket}
\usepackage{nicematrix}
\usepackage{float}
\usepackage{multirow}
\usepackage{tikz}
\usepackage[utf8]{inputenc}

\usepackage{amsthm,color}
\newtheorem{theorem}{Theorem}

\newtheorem{lemma}{Lemma}

\newtheorem{example}{Example}

\usepackage{booktabs}
\usepackage{subcaption}
\usepackage{hyperref}
\usepackage{cleveref}

\usepackage{amsmath}
\usepackage{graphicx}
\usepackage{dcolumn}
\usepackage{bm}
\usepackage{ragged2e}

\begin{document}
\crefname{equation}{Eq}{Equations} 
\crefrangelabelformat{equation}{(#3#1#4--#5#2#6)}

\title{Low-rank geometry of two-qubit gates}

\author{
  Llorenç Balada Gaggioli$^{1,2}$ \\[1em]
  \normalsize{$^{1}$ LAAS-CNRS, Université de Toulouse, France} \\
  \normalsize{$^{2}$ Czech Technical University in Prague, Prague, Czech Republic}\\
}

\date{\today}

\newcommand{\LL}[1]{{\color{purple}Llorenç: #1}}

\begin{abstract}
We present a framework based on the determinantal geometry of two-qubit gates. Combining the Weyl chamber representation with operator Schmidt theory, we interpret gate synthesis as a distance problem to determinantal varieties. This gives an operational geometry to the Weyl chamber, quantifying nonlocal complexity. We show that the $\sqrt{i\mathrm{SWAP}}$ gate is the closest perfect entangler to the variety of local operations, and that no perfect entangler can be approximated by a local gate with average gate fidelity above $79.8\%$. The three different determinantal costs form a synthesis-adapted coordinate system that encodes nonlocal complexity and generally reconstructs the Weyl chamber.
\end{abstract}

\maketitle

\section*{Introduction}

Two-qubit gates are the fundamental nonlocal resource of quantum computation. Combining them with arbitrary one-qubit rotations is sufficient to generate universal quantum circuits and moreover, any two-qubit entangling operation is universal up to local operations \cite{DiVincenzo1995,Barenco1995, Dodd2002,Lloyd1995}. But universality alone does not make all entangling gates equivalent in terms of synthesis cost, interaction requirements, and entanglement structure. This motivates the search for descriptions that isolate the intrinsic nonlocal content and organize all two-qubit operations in a canonical and operationally meaningful geometry \cite{Hammerer2002,VidalDawson2004}. 

To isolate the genuinely nonlocal content of two-qubit gates we have to separate the local degrees of freedom from one-qubit operations, leading to a classification of gates up to local equivalence \cite{Makhlin2002}. This compresses the full space of two-qubit unitaries into a compact geometric representation, based on the Cartan decomposition \cite{KhanejaGlaser2001,KrausCirac2001,ZhangValaSastryWhaley2003}. In this picture, subsets of gates like the perfect entanglers have a clear geometric interpretation, making two-qubit nonlocality not only classifiable but also geometrically explicit. 

Within this geometric framework, several notions of two-qubit nonlocality have been developed. Complete sets of local invariance characterize when gates are locally equivalent, while the Weyl-chamber picture identifies specific subsets like the perfect entanglers. State-based quantities like entanglement power and entangling capability were introduced to quantify how much entanglement can gates generate on separable states \cite{ZanardiZalkaFaoro2000,KrausCirac2001}. On the other hand, operator-level approaches based on the entanglement of quantum evolution \cite{Zanardi2001}, strength measures of unitary dynamics \cite{NielsenEtAl2003}, and operator Schmidt decomposition \cite{BalakrishnanSankaranarayanan2011QIP, BalakrishnanSankaranarayanan2011PRA, BalakrishnanLakshmanan2014,Tyson2003} emphasized that the nonlocality of two-qubit gates is not fully determined by their action on states. Together, these works show that local classification, entangling capacity, and operator complexity are closely related but genuinely distinct aspects of two-qubit gate nonlocality.

Despite the success of existing geometric and entanglement-based descriptions, a synthesis-oriented geometry of two-qubit gates is still lacking. Classification, entangling power, and implementation cost capture different aspects of nonlocality, but they do not directly quantify how far a certain gate is from structurally simpler gates. Here, we address this problem by combining the Weyl-chamber picture with operator Schmidt theory and determinantal geometry. Two-qubit synthesis is reformulated as a distance problem to low-rank operator families, giving an operational notion of nonlocal complexity to the Weyl chamber. This approach yields closed-form operator Schmidt data from Weyl coordinates, identifies the $\sqrt{i\mathrm{SWAP}}$ gate as the closest perfect entangler to locality, gives fidelity bounds under constrained architectures, and provides a set of determinantal coordinates which encode synthesis complexity almost injectively.

\section{Framework}

\subsection{Operator Schmidt decomposition}

Let $\mathcal{H}_A$ and $\mathcal{H}_B$ be two-dimensional Hilbert spaces. Any operator $U$ acting on $\mathcal{H}_A \otimes \mathcal{H}_B$ has an operator Schmidt decomposition
\begin{equation}
    U=\sum^r_{j=1} s_j A_j \otimes B_j,
\end{equation}
where $r$ is the Schmidt rank, $\{A_j\},\{B_j\}$ are orthonormal operator bases, and $s_j\geq 0$ are the Schmidt coefficients satisfying $\sum_{j=1}^r s_j^2 = \|U\|_F^2=4$, for $U\in SU(4)$. If $r=1$ we have a purely local two-qubit gate, while if $r>1$ then there is some nonlocality. Therefore, the Schmidt rank determines how many independent product operators are required to synthesize a gate $U$.

In order to compute the Schmidt coefficients, we can use the so-called realignment map $R(U)$ satisfying $R_{(i_a,j_A),(i_b,j_B)}=U_{(i_a,i_B),(j_A,j_B)}$. The Schmidt coefficients of $U$ are then given by the singular values of $R(U)$, $s_j=\sigma_j(R(U))$.

\subsection{Determinantal geometry}

For any matrix $U$ the set 
\begin{equation}
    \mathcal{D}_k=\{U: \text{rank}(R(U))\leq k\},
\end{equation}
is a determinantal algebraic variety defined by the vanishing of all $(k+1)\times (k+1)$ minors. In our context, we let $U$ be a unitary matrix. The $\mathcal{D}_1$ is, for example, the set of local operators.

We can now define the distance to the determinantal varieties by letting $U^{\star}$ be a target gate. Then we can write the distance as
\begin{equation}
    d_k(U^{\star})^2=\min_{V\in \mathcal{D}_k} \|U^{\star} - V \|^{2}_F = \sum_{j\geq k}s_j^2,
\end{equation}
where $s_0\geq s_1\geq s_2 \geq s_3$. This is given by the Eckart-Young theorem \cite{Eckart}. If a specific architecture is limited to the synthesis of rank $\leq k$ gates, then $d_k(U^{\star})$ is the minimal Frobenius norm error to generate $U^{\star}$ inside the given variety. Therefore, we have a fundamental lower bound on the achievable performance under gate locality constraints.

\section{The Weyl chamber}

The Schmidt coefficients and ranks are invariant under local unitary transformations. Let $U\in SU(4)$, two gates $U,V$ are said to be locally equivalent if there exist single-qubit unitaries $K_i\in SU(2)$ such that $V=(K_1\otimes K_2)U(K_3\otimes K_4)$. If $U=\sum^r_j s_j A_j\otimes B_j$, then $V=\sum^r_j s_j (K_1A_jK_3)\otimes (K_2 B_j K_4)$. 

This hints we should move to the space of locally invariant two-qubit unitaries. Local gates do not generate entanglement, so classifying unitaries up to local equivalence allows for the study of the truly nonlocal content of the gates. The Cartan decomposition \cite{ZhangValaSastryWhaley2003} states that every two-qubit gate is locally equivalent to
\[
U=\exp[-\frac{i}{2}(c_1X\otimes X+c_2Y\otimes Y+c_3Z\otimes Z)],
\]
for $0\leq c_3\leq c_2 \leq c_1 \leq \frac{\pi}{2}$, and for Pauli gates $X,Y,Z$. The set of allowed $(c_1,c_2,c_3)$ form a tetrahedral region called the \emph{Weyl chamber}, reducing the 15-parameter original manifold into a 3-dimensional nonlocal space. From now on we will omit the tensor product symbol for simplicity, writing $XX$ instead of $X\otimes X$, for example.

\begin{theorem}
Let $U=\exp[-\frac{i}{2}(c_1XX+c_2YY+c_3ZZ)]$ and define $ C_j=\cos(\frac{c_j}{2}),S_j=\sin(\frac{c_j}{2})$.
Then the operator Schmidt coefficients are
\begin{align*}
s_0 &= 2\sqrt{(C_1C_2C_3)^2 + (S_1S_2S_3)^2}, \\
s_1 &= 2\sqrt{(C_1S_2S_3)^2 + (S_1C_2C_3)^2}, \\
s_2 &= 2\sqrt{(S_1C_2S_3)^2 + (C_1S_2C_3)^2}, \\
s_3 &= 2\sqrt{(S_1S_2C_3)^2 + (C_1C_2S_3)^2}.
\end{align*}
\end{theorem}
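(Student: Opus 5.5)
Since $XX$, $YY$, $ZZ$ pairwise commute, $U$ factorizes as
\[
U=e^{-\frac{i}{2}c_1XX}\,e^{-\frac{i}{2}c_2YY}\,e^{-\frac{i}{2}c_3ZZ},
\]
and each factor equals $e^{-\frac{i}{2}c_jP_jP_j}=C_j I\otimes I - iS_j P_j\otimes P_j$, using $(P\otimes P)^2=I$. The plan is to expand this product directly in the Pauli product basis. We then read off the realignment $R(U)$, which will turn out to be diagonal in the orthonormal bases $\{\sigma_\mu/\sqrt2\}$. Its singular values are then the moduli of the diagonal entries.

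\textbf{Step 1: expand the product.} Multiplying the three factors gives eight terms. Each term is a product of a subset of $\{XX,YY,ZZ\}$, and we reduce these with $(XX)(YY)=(XY)\otimes(XY)=(iZ)\otimes(iZ)=-ZZ$ and its cyclic analogues, together with $(XX)(YY)(ZZ)=-(ZZ)(ZZ)=-I\otimes I$. Every term therefore collapses onto one of $II$, $XX$, $YY$, $ZZ$. Collecting coefficients, I expect
\begin{align*}
U &= (C_1C_2C_3 + iS_1S_2S_3)\, II + (-iS_1C_2C_3 - C_1S_2S_3)\, XX \\
&\quad + (-iC_1S_2C_3 - S_1C_2S_3)\, YY + (-iC_1C_2S_3 - S_1S_2C_3)\, ZZ ,
\end{align*}
where the $II$ coefficient comes from the $1$ and the triple product, and the $XX$ coefficient from $-iS_1XX$ and $(-i)^2S_2S_3(YY)(ZZ)=-S_2S_3XX$. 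The exact signs and factors of $i$ are the bookkeeping step where errors are most likely. However, only moduli matter in the end: in each coefficient one summand is real and the other purely imaginary, so the modulus squared is the sum of the two squares.

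\textbf{Step 2: translate to the Schmidt form.} Write $U=\sum_\mu a_\mu\, \sigma_\mu\otimes\sigma_\mu$ with $\sigma_\mu\in\{I,X,Y,Z\}$. The normalized operators $\sigma_\mu/\sqrt2$ are orthonormal in the Hilbert--Schmidt inner product, so
\[
U=\sum_\mu 2a_\mu\,(\sigma_\mu/\sqrt2)\otimes(\sigma_\mu/\sqrt2).
\]
Absorbing the phase of $a_\mu$ into $A_\mu$ gives an operator Schmidt decomposition with $s_\mu=2|a_\mu|$. Equivalently, $R(U)$ is unitarily equivalent to $\mathrm{diag}(2a_\mu)$, whose singular values are $2|a_\mu|$. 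This yields exactly the four stated formulas. As a consistency check,
\[
\sum_\mu s_\mu^2=4\prod_j (C_j^2+S_j^2)=4,
\]
in agreement with $\|U\|_F^2=4$.

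\textbf{Step 3: ordering.} The labels $s_0,\dots,s_3$ are intended to be in decreasing order, as required by the Eckart--Young expression for $d_k$. We therefore still need to check that on the Weyl chamber $0\le c_3\le c_2\le c_1\le\pi/2$ the four expressions satisfy $s_0\ge s_1\ge s_2\ge s_3$. Here $c_j/2\in[0,\pi/4]$, so $C_j\ge S_j\ge0$ and the $C$'s and $S$'s are monotone in $c_j$. Comparing, for example,
\[
s_1^2-s_2^2=4\bigl[(C_1^2-S_1^2)\,S_2^2C_3^2\cdot(\ldots)\bigr]
\]
after factoring, the differences should reduce to products like $(C_1^2S_2^2-S_1^2C_2^2)(C_3^2-S_3^2)$. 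These are nonnegative because $\tan(c_1/2)\ge\tan(c_2/2)$ and $c_3/2\le\pi/4$. This factorization of the pairwise differences is the main obstacle, and I would try it first for $s_0^2-s_1^2=4(C_1^2-S_1^2)(C_2^2C_3^2-S_2^2S_3^2)$ and then for the analogous pairs.
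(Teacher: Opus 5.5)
Your Steps 1--2 are the paper's proof (expand the three commuting factors in the $\{II,XX,YY,ZZ\}$ basis, rescale to the orthonormal $\sigma_\mu/\sqrt2$, and read off $s_\mu=2|a_\mu|$), and the sign slips you anticipated are indeed harmless: since $(-i)^2(YY)(ZZ)=+XX$ and $(-i)^3(XX)(YY)(ZZ)=-i\,II$, the correct coefficients are the paper's $a_0=C_1C_2C_3-iS_1S_2S_3$, $a_1=C_1S_2S_3-iS_1C_2C_3$, etc., which have the same moduli as yours. Step 3 goes beyond both the statement and the paper's proof; if you keep it, the correct factorizations are $s_1^2-s_2^2=4(C_3^2-S_3^2)(S_1^2C_2^2-C_1^2S_2^2)=2\cos c_3(\cos c_2-\cos c_1)$ and $s_2^2-s_3^2=4(C_1^2-S_1^2)(S_2^2C_3^2-C_2^2S_3^2)=2\cos c_1(\cos c_3-\cos c_2)$ (your displayed factor $C_1^2S_2^2-S_1^2C_2^2$ is nonpositive, not nonnegative), and together with $s_0^2-s_1^2=2\cos c_1(\cos c_2+\cos c_3)$ these confirm the ordering on $0\le c_3\le c_2\le c_1\le\pi/2$.
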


The proof of these closed-form expressions for the Schmidt coefficients is in Appendix \ref{app:spectrum}.

\begin{example}[The Heisenberg diagonal]

Let us consider the section of the Weyl chamber where $(c_1,c_2,c_3)=(\alpha,\alpha,\alpha)$, which includes gates such as the identity, $\sqrt{\text{SWAP}}$, and SWAP. The Schmidt spectrum is
\[
s_0=\sqrt{1+3\cos^2(\alpha)}, \quad s_1=s_2=s_3=|\sin(\alpha)|,
\]
which is ordered for $s_0$ the largest. The Frobenius distance to the rank-1 variety is then the sum of 3 the smallest Schmidt coefficients, which in this case give
\[
d_1(\alpha)^2 = \sum_{i=1}^3 s_i^2 = 3\sin^2(\alpha).
\]
When $\alpha=0$, we have $U=I$, so a local gate, as we see from $d_1(0)^2=0$. For $\alpha=\pi/4$, $U$ is locally equivalent to the $\sqrt{\text{SWAP}}$ gate, with $d_1^2=3/2$. And for $\alpha=\pi/2$, $U$ is locally equivalent to SWAP, with $d_1^2=3$, which is the maximum along this line. This provides closed form for a lower bound of the synthesis cost under a rank-1 constraint. 
\end{example}

We can consider the cost to the different varieties for the entirety of the Weyl chamber, as we only have to compute the Schmidt coefficients. While the Weyl chamber provides a coordinate system $(c_1,c_2,c_3)$ that classifies the two-qubit gates in their local equivalence classes, we now consider the underlying geometry which can add additional crucial information to the two-qubit gate synthesis.

\begin{figure}[h]
    \centering
    \includegraphics[width=0.95\linewidth]{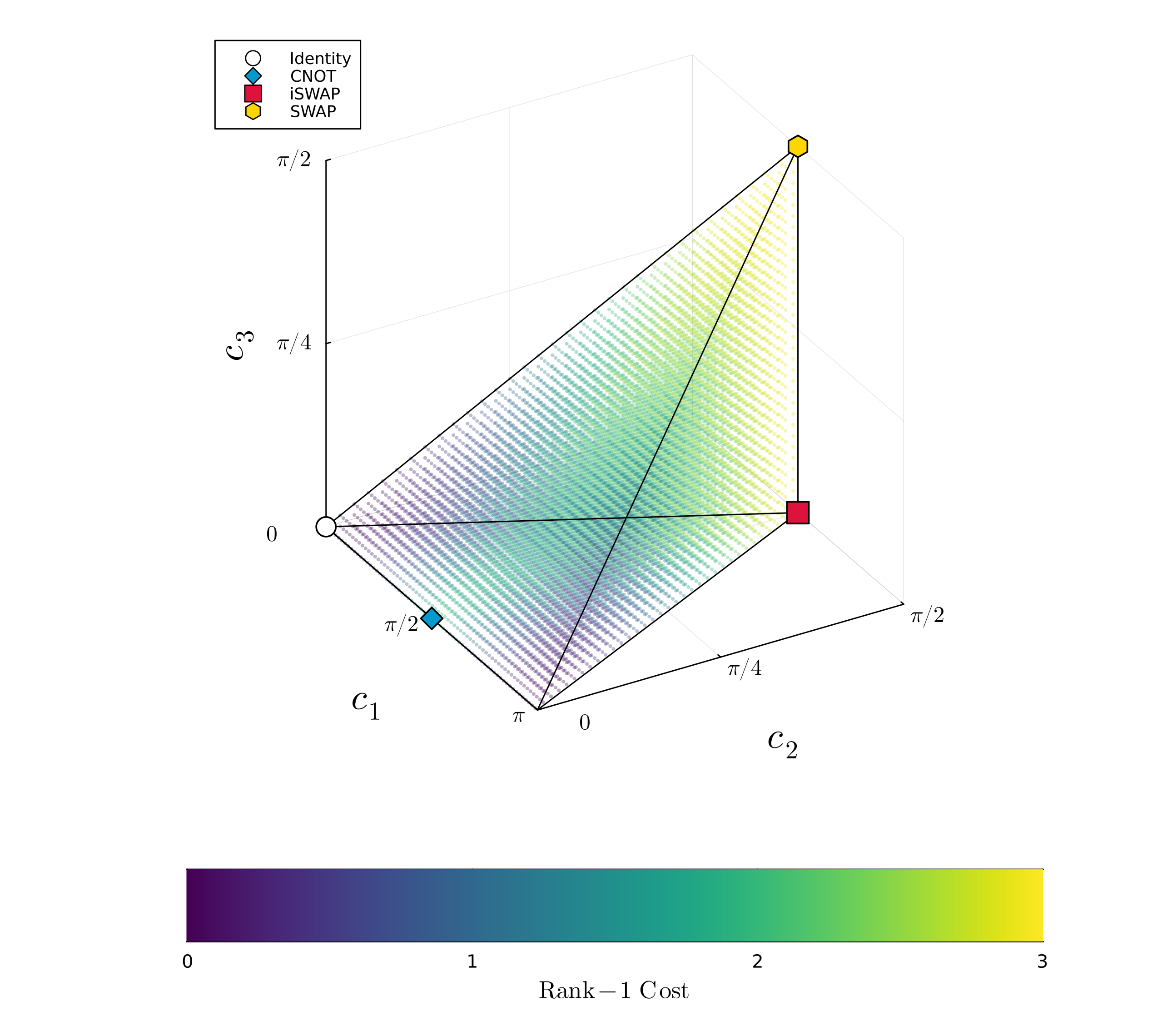}
    \caption{The Weyl chamber colored by the rank-1 determinantal cost.}
    \label{fig:weyl_heatmaps1}
\end{figure}

\begin{figure}[h]
    \centering
    \includegraphics[width=0.95\linewidth]{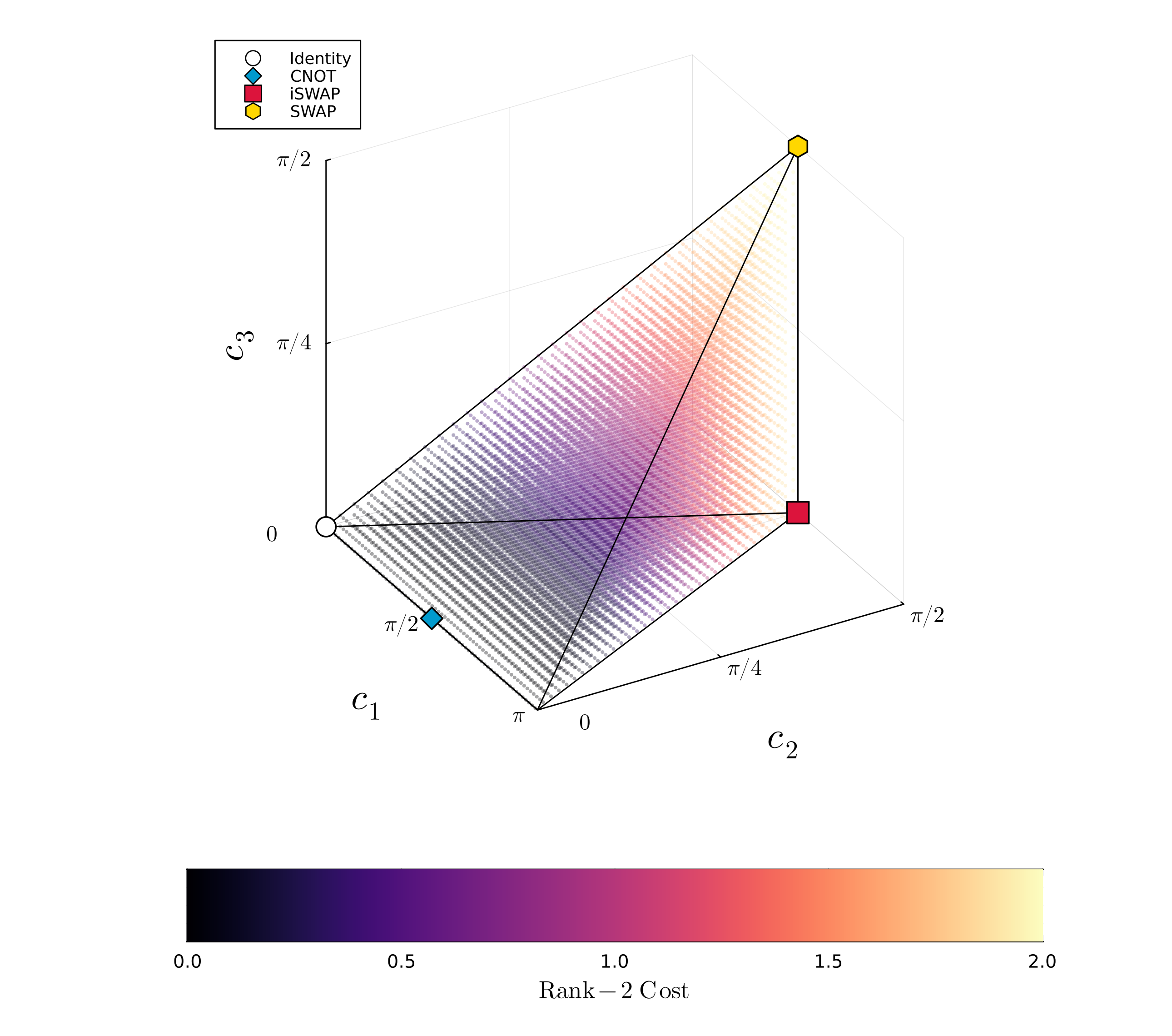}
    \caption{The Weyl chamber colored by the rank-2 determinantal cost.}
    \label{fig:weyl_heatmaps2}
\end{figure}

\section{Cost of perfect entanglers}

The Weyl chamber has a section which is occupied fully by the so-called perfect entangling gates. These are gates that can produce a maximally entangled state from an unentangled one \cite{ZhangValaSastryWhaley2003}. The perfect entangler region in the Weyl chamber is bounded by the following planes
\[
c_1+c_2\geq \frac{\pi}{2}, \quad c_2+c_3\leq \frac{\pi}{2}.
\]
All the gates in this region share the same property of being perfect entanglers, but we want to answer the question of how costly it is to synthesise them when we are constrained by the rank of the possible gates we can use. We solve the optimization problem
\begin{align}\label{eq:optimization}
    &\min_{c_1,c_2,c_3} \quad \mathcal{C}_k (U) \nonumber \\
    &\text{subject to } U \text{ a perfect entangler.}
\end{align}

\begin{theorem}
    Among all perfect entanglers in the Weyl chamber, the distance to the rank-1 variety satisfies
    \begin{equation}
        d_1^2\geq \frac{5}{2}-\sqrt{2},
    \end{equation}
    with equality if and only if $(c_1,c_2,c_3)=(\frac{\pi}{4},\frac{\pi}{4},0)$. This point corresponds to the $\sqrt{i\text{SWAP}}$ gate.
    
\end{theorem}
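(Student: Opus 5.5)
The plan is to rewrite $d_1^2$ as an explicit elementary function of the cosines of the Weyl coordinates, using the closed-form Schmidt spectrum of Theorem 1, and then to maximize that function over the perfect-entangler polytope. By the Eckart--Young formula, $d_1^2 = 4 - s_{\max}^2$. So the statement is equivalent to showing that the largest Schmidt coefficient satisfies $s_{\max}^2 \le \tfrac32+\sqrt2$ on the perfect entanglers, with equality only at $(\tfrac\pi4,\tfrac\pi4,0)$.

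\emph{Step 1: reduce to cosines.} Put $a_j=\cos c_j$ and use $C_j^2=(1+a_j)/2$ and $S_j^2=(1-a_j)/2$. Expanding the products in Theorem 1, the odd-degree terms cancel pairwise. This should give
\[
s_0^2 = 1+a_1a_2+a_1a_3+a_2a_3,\qquad s_1^2 = 1+a_2a_3-a_1a_2-a_1a_3,
\]
and analogous expressions for $s_2^2,s_3^2$. In each of these, two of the three pairwise products carry a minus sign. As a check, the four expressions sum to $4$. In the Weyl chamber $0\le c_j\le\pi/2$, so every $a_j\ge 0$. Hence $s_0$ is always the largest coefficient, and
\[
d_1^2 = 3 - e_2(a),\qquad e_2(a)=a_1a_2+a_1a_3+a_2a_3 .
\]
At $(\tfrac\pi4,\tfrac\pi4,0)$ we have $a=(\tfrac1{\sqrt2},\tfrac1{\sqrt2},1)$. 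This gives $e_2=\tfrac12+\sqrt2$ and therefore $d_1^2=\tfrac52-\sqrt2$, as claimed.

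\emph{Step 2: monotonicity.} Since $a_j\ge0$, $e_2$ is nondecreasing in each $a_j$, and so nonincreasing in each $c_j$ on $[0,\pi/2]$. Decreasing $c_3$ to $0$ never violates $c_2+c_3\le\pi/2$ or the chamber ordering. It strictly increases $e_2$ whenever $a_1+a_2>0$. The only exception is $c_1=c_2=\pi/2$, where $e_2=0$, so that point is far from optimal. We may therefore set $c_3=0$, i.e.\ $a_3=1$. We then have to maximize $g=a_1a_2+a_1+a_2$ subject to $c_1+c_2\ge\pi/2$ and $c_2\le c_1\le\pi/2$. Again by strict monotonicity, we can decrease $c_1$ or $c_2$ until the plane $c_1+c_2=\pi/2$ becomes active. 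So any maximizer lies on that plane.

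\emph{Step 3: one-variable problem.} On $c_2=\pi/2-c_1$ we have $a_2=\sin c_1$. Let $t=\sin c_1+\cos c_1\in[1,\sqrt2]$. Then
\[
g=\frac{t^2-1}{2}+t,
\]
which is strictly increasing in $t$. Its maximum is therefore at $t=\sqrt2$, i.e.\ $c_1=c_2=\pi/4$, with value $\tfrac12+\sqrt2$.

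The main obstacle is making the equality (uniqueness) claim airtight. The monotonicity arguments in Step 2 must be strict at every reduction, so I would check carefully the boundary faces of the perfect-entangler region where some $a_j=0$. Finally, I would identify $(\tfrac\pi4,\tfrac\pi4,0)$ with $\sqrt{i\mathrm{SWAP}}$ via its standard Cartan coordinates.
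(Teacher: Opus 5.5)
Your proof is correct and follows the same route as the paper's proof in Appendix~\ref{app:proofPE}. Both show that $s_0$ dominates so that $d_1^2=4-s_0^2$, push $c_3$ to $0$, push onto the face $c_1+c_2=\pi/2$, and optimize along that segment, with your $t=\sin c_1+\cos c_1$ playing the role of the paper's $\cos\theta_1\cos\theta_2=\tfrac12[\cos(\theta_1+\theta_2)+\cos(\theta_1-\theta_2)]$. Your full-angle form $s_0^2=1+e_2(\cos c_1,\cos c_2,\cos c_3)$ (the identity the paper itself uses in Appendix~\ref{app:coords}) makes the dominance of $s_0$ and the strictness needed for the equality case cleaner than the paper's half-angle Lemma, and the uniqueness concern you flag is already settled by your own strict-monotonicity checks.
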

We prove the theorem in Appendix \ref{app:proofPE}. This result sets a strictly positive gap between the rank-1 variety and the set of perfect entanglers. The $\sqrt{i\text{SWAP}}$ lies then at the minimal cost, and is therefore the least nonlocal perfect entangler under the operator-Schmidt geometry.

\subsection{Different norms, different costs}

Up to this point we have been using the Frobenius norm to define the distance. However, we can pick any unitarily invariant norm as the Eckart-Young theorem implies that the distance under any Schatten $p$-norm is fully determined by the Schmidt spectrum. We can redefine the distance to the rank-1 variety now also in terms of the $p$-norm as
\[
d_{1,p}= (s_1^p+s_2^p+s_3^p)^{1/p}, \qquad 1\leq p \leq \infty,
\]
and $d_{1,\infty}=s_1$. Thus, the notion of closest perfect entangler is dependent on the norm we choose. We have shown that the $\sqrt{i\text{SWAP}}$ gate is the closest for the Frobenius ($p=2$) norm, so let us consider if it changes the perfect entangler of interest.

If we let $p=1$ we have the trace norm. In this case $d_{1,1}=s_1+s_2+s_3$, and the optimal perfect entangler that minimizes the distance to the rank-1 variety is the CNOT, as we can see in Figure \ref{fig:d1curves} and show in Appendix \ref{app:proofPE}.

Taking a large $p$ takes us to the regime where also the biggest remaining Schmidt coefficient matters, and the distance is minimized in the $(c_1,c_2,c_3)=(\frac{\pi}{4},\frac{\pi}{4},\alpha)$ line. The gates $\sqrt{i\text{SWAP}}$ and $\sqrt{\text{SWAP}}$ are part of this line.

These results show that the notion of a “closest” perfect entangler depends on how the residual Schmidt tail is weighted. The Schatten 1-norm penalizes the tail additively and therefore favors sparse operator-Schmidt structure, the Frobenius norm gives a quadratic RMS-type penalty, and the $p\rightarrow \infty$ limit is governed by the largest remaining Schmidt coefficient. In concrete hardware settings, these different norms may become relevant when the physical control or noise model induces a corresponding cost functional. The Frobenius norm is particularly interesting because it leads directly to fidelity bounds.

\begin{figure}[h]
    \centering
    \includegraphics[width=0.95\linewidth]{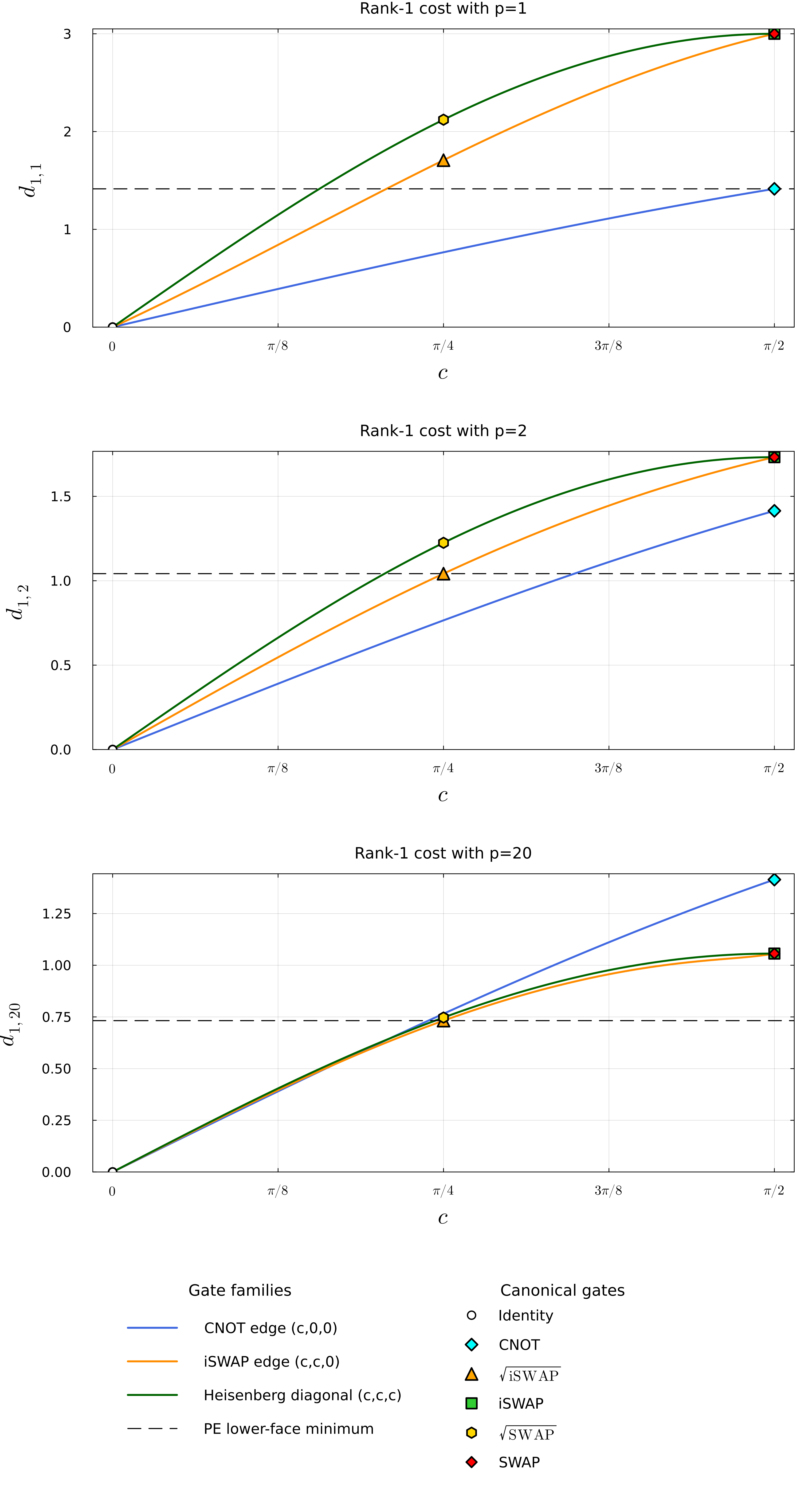}
    \caption{Rank-1 distance $d_{1,p}$ along representative one-parameter families of the  Weyl chamber for three Schatten norms.}
    \label{fig:d1curves}
\end{figure}

\subsection{Fidelity bounds}

Having bounds on the Frobenius norm means we can get bounds on the synthesis fidelity of a target gate under rank constraints. With this we know how well we can generate a specific gate if the given physical architecture has constraints of rank.

\begin{theorem}
    No perfect entangler can be approximated by a local gate with average gate fidelity exceeding 79.8\%. The bound is tight at the $\sqrt{i\text{SWAP}}$ Weyl point.
\end{theorem}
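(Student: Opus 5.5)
The plan is to turn the Frobenius gap of the previous theorem into a fidelity gap, using the standard relation between average gate fidelity and the Hilbert--Schmidt overlap. For $d=4$ and unitaries $U,V$,
\[
F_{\mathrm{avg}}(U,V)=\frac{|\mathrm{Tr}(V^\dagger U)|^2+4}{20}.
\]
So it suffices to bound $|\mathrm{Tr}(V^\dagger U)|$ uniformly over local unitaries $V=A\otimes B$ and perfect entanglers $U$. The global phase of $V$ is free and does not change $F_{\mathrm{avg}}$. I would therefore replace $V$ by $e^{i\phi}V$ with $\mathrm{Tr}(V^\dagger U)$ real and nonnegative.

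\emph{Step 1 (overlap from distance).} For unitaries, $\|U\|_F^2=\|V\|_F^2=4$. Hence
\[
\|U-V\|_F^2 = 8-2\,\mathrm{Re}\,\mathrm{Tr}(V^\dagger U),
\]
which gives $|\mathrm{Tr}(V^\dagger U)| = 4-\tfrac12\|U-V\|_F^2$ after the phase choice.

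\emph{Step 2 (invoke the previous theorem).} Since $V\in\mathcal{D}_1$, the Eckart--Young characterization gives $\|U-V\|_F^2\ge d_1(U)^2$. The previous theorem gives $d_1(U)^2\ge \tfrac52-\sqrt2$ for every perfect entangler. Therefore
\[
|\mathrm{Tr}(V^\dagger U)|\le 4-\tfrac12\bigl(\tfrac52-\sqrt2\bigr)=\tfrac{11}{4}+\tfrac{\sqrt2}{2}\approx 3.457,
\]
and hence
\[
F_{\mathrm{avg}}\le \frac{(11/4+\sqrt2/2)^2+4}{20}\approx 0.798.
\]
Local invariance of both $d_1$ and $F_{\mathrm{avg}}$ lets me work with Weyl-chamber representatives throughout. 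Equality in the Frobenius bound holds only at $(\pi/4,\pi/4,0)$, so $\sqrt{i\mathrm{SWAP}}$ is the only Weyl point where this chain of inequalities is tight in its Frobenius step.

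\emph{Step 3 (tightness, the main obstacle).} Step 2 hides a gap: the Eckart--Young minimizer $s_0A_0\otimes B_0$ need not be a multiple of a unitary. A sharper route computes the overlap directly. Expand $U=\sum_j s_jA_j\otimes B_j$. For unitary $A,B$ the coefficient vectors $(\mathrm{Tr}(A^\dagger A_j))_j$ and $(\mathrm{Tr}(B^\dagger B_j))_j$ have norm $\sqrt2$. Bounding the bilinear form by its largest singular value then gives $|\mathrm{Tr}(V^\dagger U)|\le 2s_0$.

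At $\sqrt{i\mathrm{SWAP}}$, Theorem 1 gives $s_0=1+1/\sqrt2$. Then $|\mathrm{Tr}(V^\dagger U)|\le 2+\sqrt2\approx 3.414$, and this bound is attained by $V=I$: in the canonical form only the $II$ term has nonzero trace, with $\mathrm{Tr}(U)=4\cos^2(\pi/8)=2+\sqrt2$. This yields $F_{\mathrm{avg}}\approx 0.783$. That is strictly below the $0.798$ from Step 2 and at or below it everywhere.

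This creates a tension with the statement. The inequality $F_{\mathrm{avg}}\le 79.8\%$ is correct, since the bound from Step 2 holds for every perfect entangler. However, it is not attained by any local unitary at $\sqrt{i\mathrm{SWAP}}$. The sharp value is $(2+\sqrt2)^2/20+1/5\approx 78.3\%$, attained by the identity.

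The final write-up would therefore present the $2s_0$ route as the actual tight result, and the paper's $79.8\%$ as its relaxed form. Tightness of $79.8\%$ holds only in the sense that the Frobenius relaxation $\|U-V\|_F\ge d_1$ is an equality at $\sqrt{i\mathrm{SWAP}}$ when $V$ is allowed to range over all of $\mathcal{D}_1$, not only over local unitaries.
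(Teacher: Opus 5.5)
Your Steps 1--2 are exactly the paper's proof. It uses three ingredients: the phase-optimized identity $\min_\phi\|U-e^{i\phi}V\|_F^2=8-2|\mathrm{Tr}(U^\dagger V)|$; the lower bound $\delta=\tfrac52-\sqrt2$, inherited from Theorem 2 because local unitaries lie in $\mathcal{D}_1$; and $F_{\mathrm{avg}}=(|\mathrm{Tr}(U^\dagger V)|^2+4)/20$. Together these give $\approx 0.7976$. That is the whole of the paper's argument, and nothing in it addresses the tightness clause.

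A small wording point in Step 2. What is tight at $\sqrt{i\mathrm{SWAP}}$ is the inequality $d_1(U)^2\ge\tfrac52-\sqrt2$. The step $\|U-e^{i\phi}V\|_F^2\ge d_1(U)^2$ is strict for every local unitary $V$ and every nonlocal $U$. After optimizing the phase, the left side is at least $8-4s_0=d_1(U)^2+(2-s_0)^2$.

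Your Step 3 is correct, and it shows that the tightness clause is false as literally stated. The estimate $|\mathrm{Tr}((A\otimes B)^\dagger U)|\le s_0\|A\|_F\|B\|_F=2s_0$ is valid. On the canonical representative it is attained by $V=I$, because $\mathrm{Tr}\,U=4a_0$ and $s_0=2|a_0|$ is the largest coefficient throughout the chamber. So in fact $\max_{V\ \mathrm{local}}F_{\mathrm{avg}}(U,V)=(1+s_0^2)/5$ for every two-qubit gate.

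The paper's relaxed overlap bound $4-\tfrac12 d_1^2=2+\tfrac12 s_0^2$ exceeds $2s_0$ by $\tfrac12(2-s_0)^2$. That excess is exactly the cost of using an Eckart--Young minimizer that is not unitary; here the minimizer is $\cos^2(\pi/8)\,II$.

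Combined with the uniqueness part of Theorem 2, this gives the sharp statement. The best average gate fidelity any local unitary achieves with a perfect entangler is at most $\tfrac12+\tfrac{\sqrt2}{5}\approx 78.3\%$. Equality holds only at $\sqrt{i\mathrm{SWAP}}$, via $V=I$.

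So the $79.8\%$ inequality is true but never attained. Your plan to present it as the relaxed form of the sharp $2s_0$ bound is the right resolution.
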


\begin{proof}
Let us start by noting that for $U,V\in U(d)$ we have $\min_{\phi} \| U-e^{i\phi}V \|^2_F=2d-2\max_{\phi}\Re(e^{i\phi}\text{Tr}(U^{\dag}V))$, which is equal to $2d-2|\text{Tr}(U^{\dag}V)|$. Therefore, any lower bound of the type $\min_{V}\min_{\phi} \| U-e^{i\phi}V \|^2_F\geq \delta$ implies $|\text{Tr}(U^{\dag}V)|\leq d-\frac{\delta}{2}$. Now, a bound on the average gate fidelity is then given by $F_{\text{avg}}(U,V)\leq\frac{(d-\frac{\delta}{2})^2+d}{d(d+1)}$ \cite{Pedersen2007Fidelity}, which letting $d=4$ and $\delta=\frac{5}{2}-\sqrt{2}$ yields approximately $0.7976$.      
\end{proof}

We can also consider fidelity bounds to synthesise specific gates under fixed rank constraints. For example, we can consider the SWAP gate and calculate how close we can get to generate it via rank-2 gates. The Schmidt spectrum of the SWAP gate is $s_0=s_1=s_2=s_3=1$, so the distance to the rank-2 variety is $d_2^2=2$. Therefore, we have $\delta=2$, as for the proof above, and $d=4$, which gives an upper bound on the fidelity of $0.65$.

\section{Determinantal coordinates}\label{sec:coords}

The quantities $d_k(U)^2$ are not only approximation costs to the determinantal varieties $\mathcal{D}_k$. They define a set of coordinates that encode information about the synthesis complexity of the non-local two-qubit landscape. We let $x:=d_1(U)^2,y:=d_2(U)^2$ for simplicity.

\begin{theorem} \label{thm:2detcoords}
    Let $U(c_1,c_2,c_3)$ be a two-qubit Weyl gate, for every such gate the image of the Weyl chamber under $U\mapsto (x,y)$ is the strip
    \begin{equation}
        \mathcal{R}=\{(x,y)\in\mathbb{R}^2:0\leq y\leq 2,\frac{3}{2}y \leq x \leq 2+\frac{1}{2}y\}.
    \end{equation}

    The perfect entangler region lies in the region 
    \begin{equation}
        \mathcal{R}_{\mathrm{PE}}=\{(x,y)\in\mathbb{R}^2:0\leq y\leq 2, L_{\mathrm{PE}}\leq x \leq 2+\frac{1}{2}y\},
    \end{equation}
    where 
    \[
    L_{\mathrm{PE}}=
    \begin{cases}
    2+\dfrac{y}{2}-\dfrac{(4-y)\sqrt{y(4-y)}}{4},
    & 0\le y\le 2-\sqrt2,\\[1.2ex]
    y+\dfrac12,
    & 2-\sqrt2\le y\le 1,\\[1.2ex]
    y+\dfrac12+\dfrac12\sqrt{(y-1)(3-y)},
    & 1\le y\le 2.
    \end{cases}
    \]
    
\end{theorem}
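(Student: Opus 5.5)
The plan is to pass to squared Schmidt coefficients and use the first theorem to write
\[
x=s_1^2+s_2^2+s_3^2=4-s_0^2,\qquad y=s_2^2+s_3^2 .
\]
The argument splits into three parts: an inclusion that follows from ordering alone, a surjectivity argument via boundary curves and topology, and a constrained minimization for the perfect entangler lower envelope.

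\emph{Inclusion $\mathrm{image}\subseteq\mathcal{R}$.} This should need only the ordering $s_0\ge s_1\ge s_2\ge s_3$ and $\sum_j s_j^2=4$.
\begin{itemize}
\item[] Since $s_1^2\ge s_2^2\ge s_3^2$, we have $s_1^2\ge \tfrac12(s_2^2+s_3^2)$, which gives $x\ge \tfrac32 y$.
\item[] Since $s_0^2+s_1^2\ge s_2^2+s_3^2$, we get $y\le 2$.
\item[] Since $2s_0^2\ge s_0^2+s_1^2=4-y$, we get $s_0^2\ge 2-\tfrac{y}{2}$, i.e.\ $x\le 2+\tfrac{y}{2}$.
\end{itemize}
One detail to settle first: the closed forms of the first theorem are not automatically sorted. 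I will check on the Weyl chamber $0\le c_3\le c_2\le c_1\le\pi/2$ which labelling is decreasing, or simply sort them.

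\emph{Surjectivity onto $\mathcal{R}$.} I will exhibit edges of the Weyl chamber whose images trace the three boundary pieces of $\mathcal{R}$.
\begin{itemize}
\item[] The Heisenberg diagonal $(\alpha,\alpha,\alpha)$ from the example gives $s_1=s_2=s_3$, hence $x=\tfrac32 y$ for $0\le y\le 2$.
\item[] The edge $(c,0,0)$ has $s_2=s_3=0$, so $y=0$ and $x=4\sin^2(c/2)$ runs over $[0,2]$.
\item[] A path on the face $c_1=\pi/2$ from CNOT to SWAP should give $s_0=s_1$, i.e.\ the line $x=2+\tfrac{y}{2}$; I will verify this from the closed forms.
\end{itemize}
Concatenating these gives a closed curve in the boundary of a 2-dimensional section of the chamber (a piece of $\partial$ of the tetrahedron). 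Its image winds once around $\partial\mathcal{R}$. By a standard degree argument (a disc mapped so that its boundary goes with degree one onto $\partial\mathcal{R}$ must cover the interior), the image is all of $\mathcal{R}$.

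\emph{Perfect entangler region.} The upper bound $x\le 2+\tfrac{y}{2}$ is inherited from the first part. It is attained inside the perfect entangler region, because CNOT $(\pi/2,0,0)$ has $(x,y)=(2,0)$, SWAP-type points on $c_1=\pi/2$ with $c_2+c_3\le\pi/2$ stay on the line, and the region is connected. The lower envelope is
\[
L_{\mathrm{PE}}(y)=\min\{x : y \text{ fixed},\ c_1+c_2\ge\tfrac{\pi}{2},\ c_2+c_3\le\tfrac{\pi}{2}\}.
\]
Mirroring the proof of the previous theorem (Appendix \ref{app:proofPE}), I will argue with Lagrange multipliers that the constrained minimizer lies on the boundary faces $c_1+c_2=\pi/2$ or $c_2+c_3=\pi/2$, or on their common edge. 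I will then parametrize each face and eliminate one parameter using the fixed value of $y$. I expect the three branches to correspond to three cases.
\begin{itemize}
\item[] For $0\le y\le 2-\sqrt2$, the minimum lies on the face $c_1+c_2=\pi/2$ with $c_3=0$ (the segment from CNOT to $\sqrt{i\mathrm{SWAP}}$). Here $s_3=0$, and the formula comes from solving for $s_0^2$ in terms of $y$, which produces the $\sqrt{y(4-y)}$ term.
\item[] For $2-\sqrt2\le y\le 1$, the minimum lies on the line $(\pi/4,\pi/4,\alpha)$, starting at $\sqrt{i\mathrm{SWAP}}$, where $x=y+\tfrac12$. This is consistent with $\sqrt{i\mathrm{SWAP}}$ giving $(x,y)=(\tfrac52-\sqrt2,\,2-\sqrt2)$.
\item[] For $1\le y\le 2$, the minimum lies on the face $c_2+c_3=\pi/2$ up to SWAP, giving the branch with $\sqrt{(y-1)(3-y)}$.
\end{itemize}
The main obstacle is this envelope computation. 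It requires showing that no interior point or other face beats these curves, and identifying the switching points $y=2-\sqrt2$ and $y=1$. I would first try reducing each face to a one-variable problem in trigonometric double angles, and then compare the branches directly.
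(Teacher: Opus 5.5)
Your treatment of the strip $\mathcal R$ is correct, and it takes a different route from the paper. The paper passes to $a=\cos c_1\cos c_2$, $b=\cos c_1\cos c_3$, $c=\cos c_2\cos c_3$. Then $y=2-2c$ and $x=3-c-\cos c_1(\cos c_2+\cos c_3)$, and both boundary lines come from optimizing over $c_1$ at fixed $y$. This gives inclusion and attainment in one computation and sets up the perfect-entangler analysis.

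Your inclusion uses only the ordering and $\sum_j s_j^2=4$, so it needs no closed form and holds for any such spectrum. Your degree argument is also valid: with the CNOT--SWAP edge $(\pi/2,t,t)$, your three edges bound the face $c_2=c_3$, and each maps monotonically onto one side of the triangle $\mathcal R$. On that face you can even skip topology: fixing $c_2=c_3=t$ fixes $y$, and letting $c_1$ run from $t$ to $\pi/2$ sweeps $x$ from $\tfrac32y$ to $2+\tfrac12y$.

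The perfect-entangler part, however, has a genuine gap. The bound $x\ge L_{\mathrm{PE}}(y)$ is its entire content, and you only announce it. The mechanism you propose does not deliver the localization you claim. A Lagrange argument excludes interior critical points (indeed $\partial_{c_1}y=0$ while $\partial_{c_1}x>0$ there). But that only puts the minimizer somewhere on the boundary of the PE polyhedron. Besides $c_1+c_2=\pi/2$ and $c_2+c_3=\pi/2$, that boundary has the faces $c_3=0$, $c_1=c_2$, $c_2=c_3$ and $c_1=\pi/2$. The true minimizers sit on edges where $c_3=0$ or $c_1=c_2$ is active, and the two PE faces meet only at $\sqrt{\mathrm{SWAP}}$, not along an edge.

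Your three branches are guessed rather than derived, and the third is misplaced. For $1\le y\le2$ the minimizing curve is the edge $c_1=c_2$, $c_2+c_3=\pi/2$, running from $\sqrt{\mathrm{SWAP}}$ to $i\mathrm{SWAP}=(\pi/2,\pi/2,0)$. SWAP is not a perfect entangler and does not lie on that face. Your formula survives only because SWAP and $i\mathrm{SWAP}$ share the image point $(3,2)$.

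The missing idea is the monotonicity in $c_1$ that the paper's variables expose. The paper only sketches this step, but these variables are what make it work. Fix $\kappa=\cos c_2\cos c_3=1-y/2$. Then $x$ increases with $c_1$, so you may take $c_1=\max(c_2,\pi/2-c_2)$ and maximize $\cos c_1(\cos c_2+\cos c_3)$.

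For $c_2\le\pi/4$ this quantity equals $\sin c_2\cos c_2+\kappa\tan c_2$, which is increasing in $c_2$. So $c_2$ rises until either $c_3=0$ or $c_2=\pi/4$. The first happens first iff $\kappa\ge1/\sqrt2$, giving the first branch; otherwise you reach $c_2=\pi/4$ and get $x=y+\tfrac12$.

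For $c_2\ge\pi/4$ it equals $\cos^2c_2+\kappa$, so $c_2$ falls until either $c_2=\pi/4$ or $\sin c_2\cos c_2=\kappa$. Reaching $c_2=\pi/4$ is possible iff $\kappa\ge\tfrac12$. The other stopping point is the face $c_2+c_3=\pi/2$, where $\cos^2c_2=\tfrac12\bigl(1-\sqrt{1-4\kappa^2}\bigr)$ yields the third branch.

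The switching values $\kappa=1/\sqrt2$ and $\kappa=\tfrac12$ are exactly $y=2-\sqrt2$ and $y=1$.
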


We prove this theorem in Appendix \ref{app:coords}. The lower bound of the full slice is attained exactly at the Heisenberg diagonal $c_1=c_2=c_3$, and the upper bound is attained at the face $c_1=\frac{\pi}{2}$. Note that at the face $c_1=\frac{\pi}{2}$, the map from the Weyl chamber is not unique, as we see in Figure \ref{fig:2dslice}, where the SWAP and $i\mathrm{SWAP}$ gates coincide at the same point. This means that this determinantal strip is useful to consider the map of the perfect entanglers, but not to classify them.

\begin{figure}[h]
    \centering
    \includegraphics[width=0.95\linewidth]{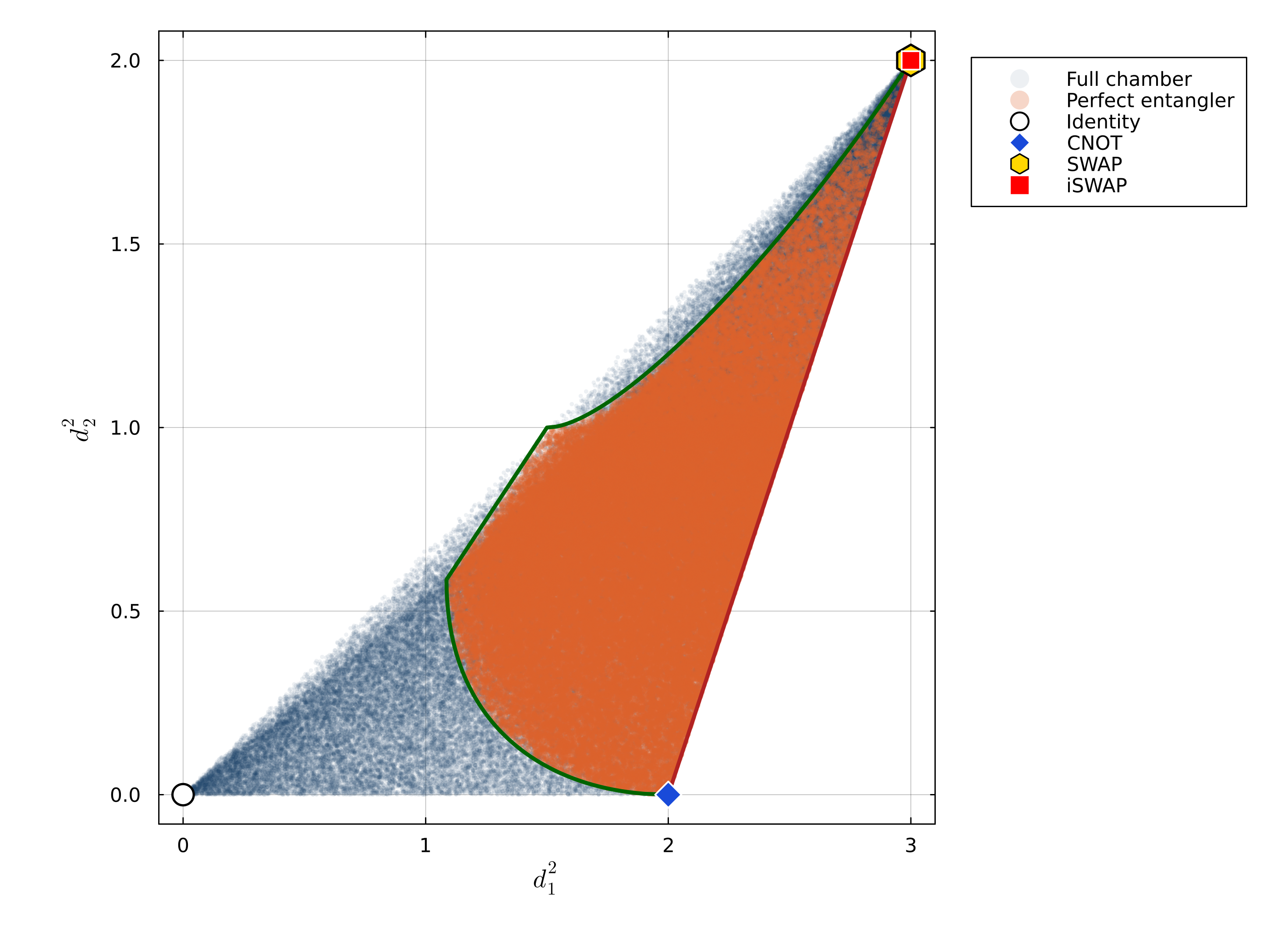}
    \caption{Image of the Weyl chamber in the determinantal plane. The orange sub-region is the perfect entangler set.}
    \label{fig:2dslice}
\end{figure}

This tells us that the determinantal coordinates do not occupy the full plane, but once we fix one, the other one is restricted to a smaller set of values. The PE region indicates that the ability to maximally entangle states does not determine by itself the operator synthesis complexity of the gate. Gates with the same entangling capacity might lie in very different areas of the strip.

\subsection{Complexity: a spectral witness}

It is known that to implement two-qubit operations we need at most three CNOT gates, and the exact synthesis problem admits explicit universal and minimal constructions \cite{VidalDawson2004,ZhangValaSastryWhaley2003PRL,ZhangValaSastryWhaley2004PRL,Vatan2004Optimal,Shende2004Minimal}. With the determinantal coordinates we can build a CNOT complexity classification coming directly from the Schmidt spectrum of the gate.

\begin{theorem} \label{thm:cnot}
    All unitaries lying on the determinantal point $(x,y)=(0,0)$ are local gates, the ones lying in the line $(x,0)$ for $0\leq x\leq 2$ have one-CNOT complexity. The set of gates implementable with at most two-CNOT operations lies in the bounded region 
    \[
    0\leq y\leq 2,2y -\frac{1}{4}y^2\leq x \leq 2+\frac{1}{2}y.
    \]
    Every gate with $x<2y -\frac{1}{4}y^2$ requires three CNOT operations to be synthesised.
\end{theorem}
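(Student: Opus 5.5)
The plan is to reduce everything to the closed-form spectrum of Theorem 1 and to the known exact CNOT-count characterization of the Weyl chamber \cite{VidalDawson2004,Shende2004Minimal}. In the convention $0\le c_3\le c_2\le c_1\le\pi/2$ that characterization reads: zero CNOTs iff $c_1=c_2=c_3=0$; at most one CNOT iff $c_2=c_3=0$ (the controlled-phase edge, whose endpoint $(\pi/2,0,0)$ is CNOT); at most two CNOTs iff $c_3=0$; otherwise three. Each claim of the statement then becomes a claim about the image of the corresponding stratum under $U\mapsto(x,y)=(s_1^2+s_2^2+s_3^2,\;s_2^2+s_3^2)$. Here $s_0\ge s_1\ge s_2\ge s_3$ always denote the ordered spectrum, so I must track which formula of Theorem 1 is smallest on each piece.

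\textbf{Low strata.} For $(0,0,0)$ we have $s_0=2$ and the rest vanish, so the point is $(0,0)$. Conversely, $x=0$ forces Schmidt rank one, hence a local gate. On the edge $c_2=c_3=0$ the spectrum is $s_0=2C_1$, $s_1=2S_1$, $s_2=s_3=0$. This gives $y=0$ and $x=4\sin^2(c_1/2)\in[0,2]$, which sweeps the segment. Conversely, $y=0$ means Schmidt rank $\le 2$, which forces two of $c_1,c_2,c_3$ to vanish. That identifies the segment with the controlled-phase family, which is implementable with one CNOT.

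\textbf{Face $c_3=0$.} Put $a=S_1^2$ and $b=S_2^2$, with $0\le b\le a\le 1/2$. Theorem 1 gives the squared coefficients $4(1-a)(1-b)$, $4a(1-b)$, $4(1-a)b$ and $4ab$. The first is the largest, the last is the smallest, and $a\ge b$ gives $4a(1-b)\ge 4(1-a)b$. Hence
\[
y=4b(1-a)+4ab=4b,\qquad x=4a(1-b)+4b(1-a)+4ab=4a+4b-4ab .
\]
For fixed $y=4b\in[0,2]$, $x$ is affine and increasing in $a$, since its slope $4(1-b)$ is positive. The allowed range $a\in[b,1/2]$ therefore maps onto an interval. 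At $a=b$ (the ray $c_1=c_2$) we get $x=8b-4b^2=2y-\tfrac14 y^2$. At $a=1/2$ (the edge $c_1=\pi/2$) we get $x=2+2b=2+\tfrac12 y$. By continuity the image of the face is exactly $\{0\le y\le 2,\;2y-\tfrac14y^2\le x\le 2+\tfrac12 y\}$, which is the stated region.

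\textbf{Three-CNOT claim and main obstacle.} The last sentence of the statement follows by contraposition. Any gate with $x<2y-\tfrac14y^2$ is not in the image of the face $c_3=0$, so it has $c_3>0$ and needs three CNOTs. The delicate part is not this computation but making the synthesis characterization precise in this paper's coordinates and ordering: two-CNOT gates are exactly those with a vanishing Weyl coordinate here. I would also stress that the region is only a necessary condition. Gates with $c_3>0$ can land inside it, as Theorem \ref{thm:2detcoords} shows, since the whole strip extends up to $x=2+\tfrac12y$. So the statement is a one-sided spectral witness, and I would phrase the proof that way.
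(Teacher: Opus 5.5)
Your one-CNOT step fails. The characterization you quote does not say that the edge $c_2=c_3=0$ is the set of gates of CNOT-count at most one. A circuit with a single CNOT and arbitrary local gates is, by definition, locally equivalent to CNOT. So the gates of CNOT-count $\le 1$ are exactly the two Weyl points $(0,0,0)$ and $(\pi/2,0,0)$. An interior edge point $(c_1,0,0)$ with $0<c_1<\pi/2$ is the controlled-phase gate $\mathrm{diag}(1,1,1,e^{i\phi})$ with $\phi=2c_1\in(0,\pi)$, for instance $\phi=\pi/2$ at $c_1=\pi/4$. Such a gate is neither local nor CNOT-equivalent, so it needs exactly two CNOTs. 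Hence your sentence that the controlled-phase family is implementable with one CNOT is false. So is the conclusion that every point of $\{(x,0):0\le x\le 2\}$ has one-CNOT complexity: your own formula $x=4\sin^2(c_1/2)$ places the CNOT class only at the endpoint $(2,0)$. The paper's proof makes the same identification (``the one-CNOT family is the edge $c_2=c_3=0$''), so this clause cannot be proved as written. What your argument does establish is that the segment is exactly the image of the controlled-phase (equivalently, controlled-unitary) edge, with the one-CNOT gates at its endpoint $(2,0)$. State it in that form, or redefine ``one-CNOT complexity'' accordingly.

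Everything else follows the paper's route and is correct. The paper parametrizes the face $c_3=0$ by $\cos c_1,\cos c_2$ and you use $S_1^2,S_2^2$. Both give $y=4S_2^2$ and $x=4S_1^2+4S_2^2-4S_1^2S_2^2$, place the boundary curves at $c_1=c_2$ and $c_1=\pi/2$, and get the three-CNOT claim by contraposition from the fact that two-CNOT gates are exactly those with $c_3=0$. Your version is a bit more careful than the paper's: you verify the ordering of the spectrum on the face, and you show the image is the whole region rather than merely contained in it. Your caveat that the region is only a necessary condition is also right. SWAP needs three CNOTs yet lands at $(3,2)$, the same point as iSWAP.
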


We show this in Appendix \ref{app:coords}. The complexity regions of the plane are illustrated in Figure \ref{fig:complexity}, showing that the coordinates $(d_1(U)^2,d_2(U)^2)$ do not only quantify nonlocality but also separate the levels of synthesis complexity in a geometrical way.

\begin{figure}[h]
    \centering
    \includegraphics[width=0.95\linewidth]{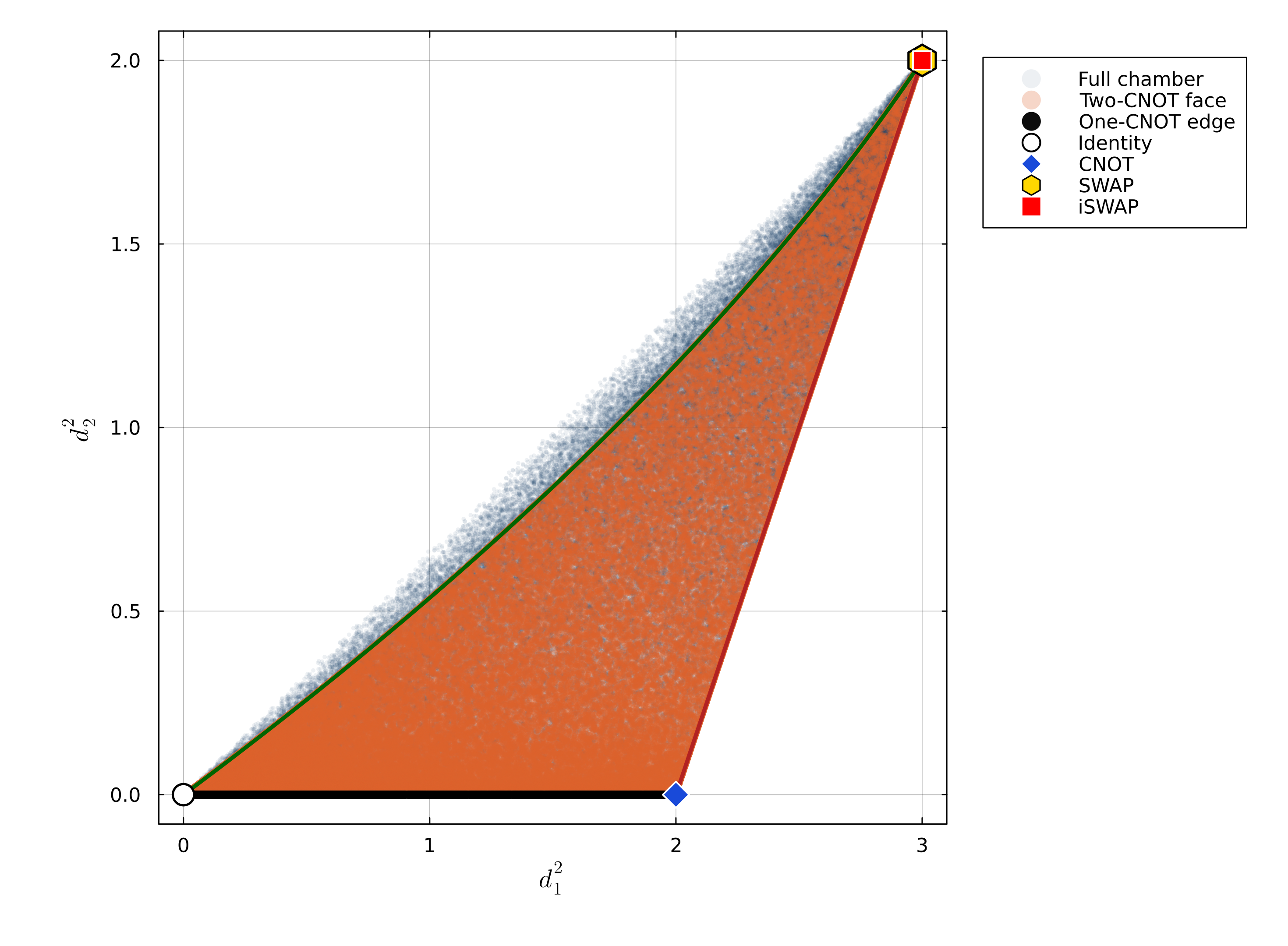}
    \caption{CNOT complexity regions in the determinantal plane. }
    \label{fig:complexity}
\end{figure}

\subsection{Completeness: the determinantal chamber}

If we let $z:=d_3(U)^2$, we have a three-coordinate system $(x,y,z)$. The first two coordinates already define the exact complexity regions, but now with $z$ we can complete this picture into a three dimensional synthesis-adapted chamber, which is an image of the Weyl chamber.

\begin{theorem}\label{thm:detchamber}
    The map $(c_1,c_2,c_3)\mapsto (x,y,z)$ is injective on the reduced Weyl chamber for $c_1 <\frac{\pi}{2}$, so each Weyl point is uniquely determined by $(x,y,z)$. On the face $c_1=\frac{\pi}{2}$, the map is no longer injective, and two points can have the same determinantal data. 
    
    The image of the reduced Weyl chamber under the determinantal coordinates is 
    \[
    \mathcal D=
    \left\{
    (x,y,z)\in\mathbb R^3:
    \begin{array}{l}
    0\le z,\ 0\le y\le x,\ x\le z+2,\\[0.4ex]
    y\ge 2z,\ x+z\ge 2y,\\[0.4ex]
    y^2-y(x+z)+4z\ge 0
    \end{array}
    \right\}.
    \]
\end{theorem}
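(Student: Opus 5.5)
The plan is to linearize the spectrum. Write $p_j:=s_j^2$ and $u_j:=\cos c_j\in[0,1]$. Since $\sum_j p_j=4$, the determinantal coordinates are
\[
x=p_1+p_2+p_3=4-p_0,\qquad y=p_2+p_3,\qquad z=p_3 .
\]
So $(x,y,z)$ carries exactly the same information as the ordered squared spectrum. Everything therefore reduces to inverting the map from the Weyl point to $(p_0,\dots,p_3)$, using the closed forms of the first theorem.

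\textbf{Step 1 (linearization).} Apply the half-angle identities $C_j^2=\tfrac12(1+u_j)$ and $S_j^2=\tfrac12(1-u_j)$ to the expressions of the first theorem and expand. The cubic terms cancel in pairs. Setting $a:=u_1u_2$, $b:=u_1u_3$, $c:=u_2u_3$, one should obtain
\[
p_0=1+a+b+c,\quad p_1=1+c-a-b,\quad p_2=1+b-a-c,\quad p_3=1+a-b-c .
\]
In the chamber we have $0\le u_1\le u_2\le u_3\le 1$, hence $0\le a\le b\le c$. This gives automatically $p_0\ge p_1\ge p_2\ge p_3$, so the labelling in the first theorem agrees with the descending order used to define $d_k$. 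Inverting the linear system then gives
\[
c=1-\tfrac{y}{2},\qquad a=1-\tfrac{x-z}{2},\qquad b=1+a-c-z .
\]
Thus $(x,y,z)$ determines $(a,b,c)$, and conversely, by an affine bijection.

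\textbf{Step 2 (injectivity).} The face $c_1=\pi/2$ is exactly $u_1=0$. Off this face we have $u_1>0$, so $a,b,c>0$. We can then recover
\[
u_1=\sqrt{ab/c},\qquad u_2=a/u_1,\qquad u_3=b/u_1,
\]
and $c_j=\arccos u_j\in[0,\pi/2]$ is unique. This proves injectivity for $c_1<\pi/2$. On the face, $a=b=0$ and only the product $c=u_2u_3$ survives. Every pair $(u_2,u_3)$ with the same product, subject to $u_2\le u_3$, gives the same $(x,y,z)$; for instance SWAP and $i\mathrm{SWAP}$ both have $c=0$. This exhibits the failure of injectivity.

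\textbf{Step 3 (image).} I would characterize which triples $(a,b,c)$ arise from the chamber. The conditions are $0\le a\le b\le c$, together with the constraint that the reconstructed $u_j$ lie in $[0,1]$ and are ordered. Given $u_1^2=ab/c$, $u_2^2=ac/b$ and $u_3^2=bc/a$, ordering is equivalent to $a\le b\le c$, and $u_3\le1$ is equivalent to $bc\le a$; the other bounds then follow. The remaining inequalities, including $c\le1$ (which is $y\ge0$) and $z\ge0$, should come out as consequences or from the physical constraint $p_3\ge0$. Substituting Step 1 translates these as follows:
\begin{itemize}
\item $a\ge0$ becomes $x\le z+2$;
\item $b\ge a$ becomes $y\ge 2z$;
\item $c\ge b$ becomes $x+z\ge 2y$;
\item $bc\le a$ becomes $y^2-y(x+z)+4z\ge0$, after multiplying out and clearing the factor $4$.
\end{itemize}
The inequalities $0\le y\le x$ come from positivity and ordering of the $p_j$.

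The main obstacle I expect is Step 3. One must check that the listed inequalities are not only necessary but sufficient, i.e.\ that every point of $\mathcal D$ actually lifts to a chamber point. This includes the degenerate boundaries where $a=0$ or $c=0$, where the reconstruction formulas divide by zero; there I would instead exhibit preimages directly on the face $u_1=0$. It is also likely that some listed inequalities (e.g.\ $y\le x$) are redundant given the others. I would verify the translation carefully rather than trust each inequality separately.
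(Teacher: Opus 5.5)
Your proposal is correct and follows the same route as the paper: the substitution $a=\cos c_1\cos c_2$, $b=\cos c_1\cos c_3$, $c=\cos c_2\cos c_3$ makes $(x,y,z)$ affine in $(a,b,c)$, the square-root reconstruction $\cos c_1=\sqrt{ab/c}$ etc.\ gives injectivity off the face $c_1=\pi/2$, and the conditions $0\le a\le b\le c$ and $bc\le a$ (the image of the face $c_3=0$) translate into the inequalities of $\mathcal D$. Your version is more careful than the paper's sketch: you verify the spectral ordering, and you flag the degenerate slice $a=b=0$, where $bc\le a$ does not force $c\le 1$, so that bound has to come from $y\ge 0$ (or equivalently $z\ge 0$ there); you are also right that $y\le x$ is redundant.
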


The proof of this result is in Appendix \ref{app:coords}. The determinantal chamber is a synthesis analogue of the Weyl chamber as it quantifies the distance to low-rank operator structure and, away from the singular face $c_1=\frac{\pi}{2}$, the coordinates uniquely determine a Weyl point. This geometry does not merely describe the landscape of two-qubit gates, but it organizes it in operational terms.

\section*{Conclusion}

In this work, we use determinantal geometry to turn the Weyl chamber from a descriptive classification of the two-qubit operations to a synthesis-oriented geometric landscape. The distances to low-rank varieties quantify the unavoidable implementation cost, singling out gates like the $\sqrt{i\text{SWAP}}$ as the least nonlocal perfect entangler. This geometry also allows for the study of the CNOT complexity of the gates, showing how the determinantal distances are not only a way to study synthesis cost but a natural language to describe the compilation and control of two-qubit operations under implementation constraints.

Within the framework presented in this paper, many more things can be studied. The same geometry can be used to study how nonlocal cost grows under a given Hamiltonian, how anisotropy selects which gate families are naturally reachable, and how determinantal cost can serve as a predictor of compilation depth and robustness. In this sense, the determinantal picture is not only a static description of two-qubit gates, but a flexible framework for relating algebraic rank structure, physical interactions, and gate-synthesis complexity.  

Future work should investigate in more depth how the determinantal costs can have an active role in hardware-aware gate design, as the norm dependence of the cost suggests the geometry can be adapted to different physical platforms. The next major conceptual challenge is to study whether an analogous determinantal geometry arises beyond two qubits, where there is no Weyl chamber representation and the existence of a determinantal chamber is not automatic.

\section*{Acknowledgements}

I am grateful to Ji\v{r}\'{\i} Vala and
Jakub Mare\v{c}ek for insightful discussions regarding two-qubit operations. The author acknowledges the use of AI for assistance with brainstorming ideas, mathematical development, coding and drafting the manuscript. The final content, analysis and conclusions remain the sole responsibility of the
author.

This work has been supported by European Union’s HORIZON–MSCA-2023-DN-JD programme under the Horizon Europe (HORIZON) Marie Skłodowska-Curie Actions, grant agreement 101120296 (TENORS).

\bibliography{references}

\appendix

\section{Proof of Schmidt spectrum}\label{app:spectrum}

We can write a two-qubit gate via the Cartan decomposition as $U=e^{-\frac{i}{2}c_1XX}e^{-\frac{i}{2}c_2YY}e^{-\frac{i}{2}c_3ZZ}$. If we let $C_j=\cos(\frac{c_j}{2}),S_j=\sin(\frac{c_j}{2})$, then we have
\[
U=(C_1I-iS_1XX)(C_2I-iS_2YY)(C_3I-iS_3ZZ).
\]
Recall, under usual product, $X^2=Y^2=Z^2=I$, $XY=iZ$, $YZ=iX$, $ZX=iY$. Therefore, we can write
\[
U=a_0II+a_1XX+a_2YY+a_3ZZ,
\]
with 
\begin{align*}
    a_0&=C_1C_2C_3-iS_1S_2S_3\\
    a_1&=C_1S_2S_3-iS_1C_2C_3\\
    a_2&=S_1C_2S_3-iC_1S_2C_3\\
    a_3&=S_1S_2C_3-iC_1C_2S_3.
\end{align*}
To write the unitary from orthonormal bases we let
\[
U=(2a_0)\frac{II}{2}+(2a_1)\frac{XX}{2}+(2a_2)\frac{YY}{2}+(2a_3)\frac{ZZ}{2},
\]
which yields $s_i=|2a_i|$.

\section{Proofs of geometric thresholds for perfect entanglers}\label{app:proofPE}

Let us write the Frobenius distance to the rank-1 determinantal variety as
\[
d_1^2=\sum_{i>1}s_i^2 = 4-\max_i s_i^2,
\]
such that minimizing the distance is equivalent to maximizing the largest Schmidt coefficient squared.

Let $\theta_j=\frac{c_j}{2}$, and recall $|a_0|^2=(\cos\theta_1 \cos\theta_2 \cos\theta_3 )^2+(\sin\theta_1 \sin\theta_2 \sin\theta_3 )^2$, and similarly for $|a_1|^2,|a_2|^2,|a_3|^2$.

\begin{lemma}
    The maximizer of $\max_i |a_i|^2$ on the Weyl chamber has $c_3=0$.
\end{lemma}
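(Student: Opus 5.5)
The plan is to read the lemma in the setting where it is used, namely the maximization of $\max_i|a_i|^2$ over the perfect entangler region $c_1+c_2\ge\frac{\pi}{2}$, $c_2+c_3\le\frac{\pi}{2}$ inside the Weyl chamber. I will show two things: that $|a_0|^2$ is always the largest of the four terms, and that $|a_0|^2$ does not increase in $c_3$. Lowering $c_3$ to $0$ keeps the constraints satisfied, since $c_2+c_3\le\frac{\pi}{2}$ only becomes easier, $c_1+c_2\ge\frac{\pi}{2}$ does not involve $c_3$, and $c_3\ge 0$ remains valid. So the maximum can be taken at $c_3=0$. Throughout, $\theta_j=c_j/2\in[0,\frac{\pi}{4}]$, so $C_j\ge S_j\ge 0$.

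\textbf{Step 1: $|a_0|$ dominates.} Using the expressions of Appendix~\ref{app:spectrum}, I expand the differences directly. For example,
\[
|a_0|^2-|a_1|^2=(C_1^2-S_1^2)\,(C_2^2C_3^2-S_2^2S_3^2).
\]
Both factors are nonnegative because $C_j\ge S_j$. The same factorization, with the roles of the indices permuted, gives $|a_0|^2\ge|a_2|^2$ and $|a_0|^2\ge|a_3|^2$. Hence $\max_i|a_i|^2=|a_0|^2$ on the whole Weyl chamber. This also identifies $s_0$ as the largest Schmidt coefficient, consistent with the ordering used earlier.

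\textbf{Step 2: monotonicity in $c_3$.} For fixed $\theta_1,\theta_2$, write
\[
|a_0|^2=A\cos^2\theta_3+B\sin^2\theta_3, \qquad A=(C_1C_2)^2,\quad B=(S_1S_2)^2.
\]
Since $A\ge B$, its derivative is $-(A-B)\sin 2\theta_3\le 0$, so $|a_0|^2$ is nonincreasing in $\theta_3\in[0,\frac{\pi}{4}]$. It is strictly decreasing unless $C_1C_2=S_1S_2$, that is, $\cos(\theta_1+\theta_2)=0$. Inside the chamber this happens only at $c_1=c_2=\frac{\pi}{2}$, which violates $c_2+c_3\le\frac{\pi}{2}$ unless $c_3=0$. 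Therefore every maximizer has $c_3=0$, not merely some maximizer.

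The step I expect to need the most care is the factorization in Step 1. Once $|a_0|^2-|a_i|^2$ is written as a product of two factors of the form $(C^2-S^2)$ or $(CC-SS)$, everything follows from $\theta_j\le\frac{\pi}{4}$. I would therefore verify the two remaining cases, $i=2$ and $i=3$, explicitly rather than appeal to symmetry. The symmetry of $a_0,\dots,a_3$ under permuting the indices $1,2,3$ holds because each $a_i$ for $i\ge1$ swaps $C\leftrightarrow S$ in exactly two slots, but it is worth checking directly. A secondary point is the boundary case $A=B$, treated at the end of Step 2, which is needed for the \emph{only if} in the subsequent theorem.
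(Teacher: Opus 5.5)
Your proof is correct and follows the same route as the paper. Both arguments fix $c_1,c_2$, note that $|a_0|^2$ is the largest coefficient, and use that it is affine in $\cos^2\theta_3$ with the larger weight $(C_1C_2)^2$, so lowering $c_3$ to $0$ can only increase it. Your explicit factorizations $|a_0|^2-|a_i|^2=(C_i^2-S_i^2)(\cdots)$ make rigorous the paper's informal claim that $|a_0|^2$ dominates, and your feasibility and strictness checks, including the degenerate case $c_1=c_2=\frac{\pi}{2}$, supply the uniqueness the paper leaves implicit.
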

\begin{proof}
    Fix $c_1,c_2$ and let $t=\cos^2\theta_3\in [0,1]$. Then 
    \[
    |a_0|^2=(\cos\theta_1 \cos\theta_2)^2t+(\sin\theta_1 \sin\theta_2 )^2(1-t),
    \]
    and similarly for the other coefficients as written in Appendix \ref{app:spectrum}. These are maximized at either $t=0$, or $t=1$. In the Weyl chamber we have $\cos\theta_j\geq \sin\theta_j$ for $0\leq \theta_j\leq \frac{\pi}{4}$, so the biggest component of all coefficients is $(\cos\theta_1 \cos\theta_2)^2$. This means that the largest coefficient is $|a_0|^2$ and its coefficient of $t$ is larger, therefore we have the maximum at $t=1$, where $c_3=0$.
\end{proof}

As the largest coefficient is $|a_0|^2$, we then have
\[
\max_i s_i^2 = 4(\cos\theta_1 \cos\theta_2)^2.
\]

If we let $c_3=0$ then the relevant constraint for perfect entanglers is then $c_1+c_2\geq \frac{\pi}{2}$. The product $\cos\theta_1\cos\theta_2$ is maximized when $c_1+c_2$ is minimized, so at $c_1+c_2=\frac{\pi}{2}$.

\begin{lemma}
    On the line $c_1+c_2=\frac{\pi}{2}$, the product $\cos\theta_1\cos\theta_2$ is maximized at $c_1=c_2=\frac{\pi}{4}$.
\end{lemma}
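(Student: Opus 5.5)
We need to show that on the segment $c_1+c_2=\frac{\pi}{2}$, with $\theta_j=\frac{c_j}{2}$ and hence $\theta_1+\theta_2=\frac{\pi}{4}$, the product $\cos\theta_1\cos\theta_2$ is maximized exactly at $\theta_1=\theta_2=\frac{\pi}{8}$, i.e.\ at $c_1=c_2=\frac{\pi}{4}$. The plan is to remove the constraint with a product-to-sum identity, so that the problem becomes a one-variable maximization of a cosine.

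By the product-to-sum formula,
\[
\cos\theta_1\cos\theta_2=\tfrac12\left[\cos(\theta_1+\theta_2)+\cos(\theta_1-\theta_2)\right]=\tfrac12\left[\tfrac{\sqrt2}{2}+\cos(\theta_1-\theta_2)\right].
\]
The first term is now constant, so only $\cos(\theta_1-\theta_2)$ varies along the line.

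Next I will use the Weyl chamber constraints to pin down the range of the difference. These are $0\le c_2\le c_1\le\frac{\pi}{2}$, together with $c_3=0$ from the previous Lemma. Combined with $c_1+c_2=\frac{\pi}{2}$, they give $c_1\in[\frac{\pi}{4},\frac{\pi}{2}]$. Hence $\theta_1-\theta_2=c_1-\frac{\pi}{4}\in[0,\frac{\pi}{4}]$.

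On this interval cosine is strictly decreasing. The product is therefore uniquely maximized at $\theta_1=\theta_2$, that is, at $c_1=c_2=\frac{\pi}{4}$. The maximal value is
\[
\tfrac12\left(\tfrac{\sqrt2}{2}+1\right)=\tfrac{2+\sqrt2}{4}.
\]
Squaring gives $(\cos\theta_1\cos\theta_2)^2=\frac{3+2\sqrt2}{8}$, so $\max_i s_i^2=\frac{3+2\sqrt2}{2}$ and $d_1^2=4-\frac{3+2\sqrt2}{2}=\frac52-\sqrt2$. This is consistent with Theorem 2.

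There is no real obstacle here. The only point needing care is the domain: I must use the chamber ordering $c_2\le c_1$ (or symmetry under $c_1\leftrightarrow c_2$) to confirm that the admissible segment contains the symmetric point. Uniqueness comes from strict monotonicity of cosine on $[0,\frac{\pi}{4}]$. As a cross-check, one can differentiate $f(\theta_1)=\cos\theta_1\cos(\frac{\pi}{4}-\theta_1)$: this gives $f'(\theta_1)=-\sin(2\theta_1-\frac{\pi}{4})$, which is positive for $\theta_1<\frac{\pi}{8}$ and negative for $\theta_1>\frac{\pi}{8}$, confirming the same maximizer.
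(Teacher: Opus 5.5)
Your proof is correct and follows the paper's argument: with $\theta_1+\theta_2=\frac{\pi}{4}$ fixed, the product-to-sum identity reduces the problem to maximizing $\cos(\theta_1-\theta_2)$, which is largest at $\theta_1=\theta_2=\frac{\pi}{8}$. Your additional step of using the chamber ordering to get $\theta_1-\theta_2\in[0,\frac{\pi}{4}]$, then invoking strict monotonicity for uniqueness, makes precise what the paper leaves implicit.
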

\begin{proof}
    Write $\cos\theta_1\cos\theta_2=\frac{1}{2}(\cos(\theta_1+\theta_2)+\cos(\theta_1-\theta_2))$, and note $\theta_1+\theta_2=\frac{1}{2}(c_1+c_2) =\frac{\pi}{4}$, so we want to maximize $\cos(\theta_1-\theta_2)$, which happens at $\theta_1=\theta_2$. Therefore, we conclude $\theta_1=\theta_2=\frac{\pi}{8}$.
\end{proof}

At this point, we have $\max_i s_i^2 = 4\cos^4(\frac{\pi}{8})=\frac{3+2\sqrt{2}}{2}$, and therefore $\min d_1^2= 4-\frac{3+2\sqrt{2}}{2}=\frac{5}{2}-\sqrt{2}$.

Now we want to prove that CNOT is the optimal gate for the $p=1$ norm. We start in a similar way as for $p=2$ and also conclude the minimizer lies at $c_3=0$. Then we have $d_{1,1}=2(S_1C_2+C_1S_2+S_1S_2)$, which we can write as $2(\sin(\theta_1 + \theta_2)+\sin\theta_1 \sin\theta_2)\geq 2(\sin(\theta_1 + \theta_2))\geq 2\sin(\frac{\pi}{4})$, as $\theta_1 + \theta_2\geq \frac{\pi}{4}$ for the perfect entanglers. Therefore, all perfect entanglers satisfy $d_{1,1}\geq \sqrt{2}$, and the equality holds when $\sin\theta_1 \sin\theta_2=0$ and $\theta_1 + \theta_2 =\frac{\pi}{4}$. This leads to the point $(\frac{\pi}{2},0,0)$, which is locally equivalent to the CNOT gate, and $d_{1,1}\geq \sqrt{2}$ for all perfect entanglers.

For the limit $p=\infty$ we find, following similar arguments that the minimizers is not unique, but is the family of points $(\frac{\pi}{4},\frac{\pi}{4},c)$, containing the $\sqrt{i\text{SWAP}}$ and $\sqrt{\text{SWAP}}$ gates.

\section{Proofs of determinantal coordinates}
\label{app:coords}

Here we prove the three theorems of Section \ref{sec:coords}. For this we define $a=\cos c_1\cos c_2,\ b=\cos c_1\cos c_3,\ c=\cos c_2\cos c_3$. With these, note that $s_0^2=1+a+b+c,\  s_1^2=1-a-b+c,\ s_2^2=1-a+b-c,\ s_3^2=1+a-b-c$, and therefore $x=d_1^2=3-a-b-c,\ y=d_2^2=2-2c,\ z=d_3^2=1+a-b-c$.

\textbf{Proof Theorem \ref{thm:2detcoords}}. Once $y$ is fixed, then $c=\cos c_2\cos c_3$ is too, so we have $x=3-c-\cos c_1(\cos c_2 + \cos c_3)$. The maximum of $\cos c_1(\cos c_2 + \cos c_3) $ is for $\cos c_1=\cos c_2 = \cos c_3$, which gives the lower boundary $x=\frac{3}{2}y$. The minimum occurs when $\cos c_1=0$, so for $c_1=\frac{\pi}{2}$, where we have the upper boundary $x=2+\frac{1}{2}y$. We repeat the same process for the perfect entanglers, when we add the constraints $c_1+c_2\geq \frac{\pi}{2}, \ c_2+c_3\leq \frac{\pi}{2}$. The lower boundary now is determined by three different families of constraints: $c_3=0,\ c_1+c_2=\frac{\pi}{2}$; $c_1=c_2=\frac{\pi}{4}$; and $c_1=c_2,\ c_2+c_3=\frac{\pi}{2}$. This is the reason behind the lower bound of the PE region being piecewise.

\textbf{Proof Theorem \ref{thm:cnot}}. The one-CNOT family is the edge $c_2=c_3=0$, which in determinantal coordinates is the segment $y=0,\ 0\leq x\leq 2$. The two-CNOT gate region lies in the Weyl face $c_3=0$, for which we have $y=2(1-\cos c_2)$ and $x=3-\cos c_2-\cos c_1(1+\cos c_2)$. $\cos c_1=0$ gives the upper boundary $x=2+ \frac{1}{2}y$, and $\cos c_1=\cos c_2$ gives the lower bound $x=2y-\frac{1}{4}y^2$. Any point to the left of the parabola requires three CNOTs to be generated. 

\textbf{Proof Theorem \ref{thm:detchamber}}. The triple $(x,y,z)$ already determines the Schmidt spectrum, as $s_3^2=z,\ s_2^2=y-z,\ s_1^2=x-y,\ s_0^2=4-x$. From these, we get the products $a,b,c$ which, away from the $c_1=\frac{\pi}{2}$ face, allow us to recover the Weyl coordinates
\[
\cos c_1=\sqrt{\frac{ab}{c}},\ \cos c_2=\sqrt{\frac{ac}{b}},\ \cos c_3=\sqrt{\frac{bc}{a}}.
\]
For $c_1<\frac{\pi}{2}$ the map is then injective. However, in the face $c_1=\frac{\pi}{2}$ we have $a=b=0$, and the data only depends on $c=\cos c_2 \cos c_3$, so different Weyl points collapse to the same determinantal coordinates. 

We now obtain the description of the determinantal chamber $\mathcal{D}$ by translating the elementary constraints on $a,b,c$ into constraints on $x,y,z$. The ordering gives the linear inequalities, and the curved Weyl boundary becomes the quadratic inequality condition
\[
y^2-y(x+z)+4z\geq 0.
\]

\end{document}